\newcommand{\op}[1]{\textsc{#1}}
\newcommand{\tcr}{}
\newcommand{\rk}{\tcr{21}} 
\newcommand{\rkk}{\tcr{42}} 
\newcommand{\rkkk}{\tcr{63}} 
\newcommand{\hp}{\tcr{36}} 
\newcommand{\hpac}{\tcr{8}} 
\newcommand{\gpc}{\tcr{6}} 
\newcommand{\nib}{\tcr{6}} 
\newcommand{\gpcnib}{\tcr{36}} 
\newcommand{\nbn}{\tcr{7}} 
\newcommand{\nbpe}{\tcr{12}} 
\newcommand{\nbp}{\tcr{84}} 
\newcommand{\emmc}{\tcr{117}} 
\newcommand{\emacb}{\tcr{102}} 
\newcommand{\emac}{\tcr{101}} 
\newcommand{\mhc}{\tcr{21}} 
\newcommand{\dcmc}{\tcr{27}} 
\newcommand{\dcac}{\tcr{26}} 
\newcommand{\dac}{\tcr{127}} 
\newcommand{\dmc}{\tcr{144}} 
\newtheorem{theorem}{Theorem}
\newtheorem{lemma}[theorem]{Lemma}
\newtheorem{corollary}[theorem]{Corollary}
\begin{document}

\title{Improved Upper Bounds for Pairing Heaps\thanks{Part of
this work appeared in preliminary form in \cite{swat00} and \cite{thesis}.}}

\author{John Iacono\thanks{Research supported by NSF grants
CCR-9732689, CCR-0430849, CCF-1018370, and by an Alfred P.~Sloan Fellowship. Research was partially completed
at Rutgers, The State University of New Jersey---New Brunswick.}}

\date{Department of Computer Science and Engineering\\
New York University Polytechnic School of Engineering\\
}

\maketitle

\begin{abstract}

Pairing heaps are shown to have constant amortized time
\op{Insert} and \op{Meld}, thus showing that pairing heaps
have the same amortized runtimes as Fibonacci heaps for all
operations but \op{Decrease-Key}.

\end{abstract}

\section{Introduction}

\begin{figure}
\begin{center}
\begin{tabular}{cc||cc}
& & \op{Insert} and \op{Meld}  & \op{Decrease-Key} \\ \hline \hline
1986 &Original pairing heap \cite{pair} & $O(\log n)$  & $O(\log n)$ \\
1987&Stasko and Vitter (variant) \cite{vitter} & $O(1)$ & Forbidden \\
1999&Fredman \cite{noconst} & $O(\log n)$  & $\Omega(\log \log n)$ \\
2001&This Paper & $O(1)$ & $O(\log n)$\\
2005&Pettie \cite{pettie} & $O(2^{2 \sqrt{\log \log n}})$ & 
$O(2^{2 \sqrt{\log \log n}})$ \\
2010&Elmasry  (variant)\cite{DBLP:conf/esa/Elmasry10} & $O(1)$& $O(\log \log n)$\\
\end{tabular}
\end{center}
\caption{Summary of previous results for pairing heaps and their variants, in chronological order of first appearance. All previous results support
$O(1)$ amortized \op{Make-Heap} and $O(\log n)$ amortized \op{Extract-Min}. All results are amortized,
so the runtimes of the individual operations can not be mixed among the
different results. The result of Fredman's \cite{noconst} gives an amortized lower
bound for \op{Decrease-Key}, given amortized upper bounds for other
operations. The result of Stasko and Vitter is for a variant of
pairing heaps and does not allow the \op{Decrease-Key} operation.}
\label{f1}
\end{figure}

Pairing heaps were introduced in \cite{pair}
as a priority queue data structure modeled after
splay trees \cite{splay}. As in splay trees, 
pairing heaps do not augment the nodes of the heap
with any information, and use simple local restructuring
heuristics to perform all operations. Such structures are referred to as \emph{self-adjusting}. They are easy to
code and empirically perform well \cite{vitter}.

While the study of splay trees has been focused on their ability to
quickly execute various distributions of operations
\cite{splay,cole,cole2,algorithmica04,focs04,soda01,splayseq,mbed,amr,blum},
the study of pairing heaps remains stuck at a much earlier stage ---
tight amortized bounds on the runtimes of all operations in terms of
the heap size remain unknown. There is also small body of work studying how pairing heaps work on particular distributions of operations
\cite{amr2,algorithmica04a,swat00}.

On the practical side, pairing heaps are in use. They are covered in
some elementary and intermediate level data structures texts
\cite{weiss1,weiss2,weiss3,weiss4}. Additionally, one can find on
the internet code that has been developed to implement pairing heaps in
various languages. Pairing heaps were part of the pre-STL GNU C++
Library, and thus were distributed widely. 

The theoretically leading non-self adjusting priority queue is the
Fibonacci heap \cite{fib}, and pairing heaps are designed to support
the same set of operations as Fibonacci heaps:

\begin{itemize}

\item{$h$=\op{Make-Heap}()}: Returns an identifier $h$ of a new empty heap.

\item{$x$=\op{Extract-Min}($h$)}: Removes and returns $x$, the minimum element in 
heap $h$.

\item{$p$=\op{Insert}($h, x$)}: Inserts $x$ into heap $h$, and returns an identifier
$p$ that can be used to manipulate $x$ in the future.

\item{\op{Delete}($h,p$)}: Removes the item in $h$ identified by $p$.

\item{\op{Decrease-Key}($h, p, \Delta$)}: Decreases the key value of the
item in $h$ identified by $p$ by a nonnegative amount~$\Delta$.

\item{$h$=\op{Meld}($h_1,h_2$)}: Combines the contents of heaps $h_1$ and $h_2$
into a new heap with returned identifier $h$. The identifiers to~$h_1$ and~$h_2$
are no longer valid.

\item{$x$=\op{Find-Min}($h$)}: Returns the minimum element of the heap $h$.

\end{itemize}

In the original pairing heap paper \cite{pair}, all operations except
the constant-amortized-time \op{Make-Heap} and \op{Find-Min} were shown to take $O(\log n)$
amortized time (see Figure~\ref{f1} for a comparison of the known bounds on pairing heaps and their variants
). It was conjectured in \cite{pair} and empirical evidence was
presented by Stasko and Vitter \cite{vitter} that pairing heaps share
the same amortized cost per operation as Fibonacci heaps, which have
$O(1)$ \op{Decrease-Key}, \op{Insert} and \op{Meld} operations.  However, this
possibility was eliminated when it was shown by Fredman \cite{noconst}
that the amortized cost of \op{Decrease-Key} must be $\Omega(\log
\log n)$ amortized, given $O(\log n)$ amortized costs for the other operations. 
Recently, Pettie has produced a new analysis that focuses on the
\op{Decrease-Key} operation, where he proves a $O(2^{2 \sqrt{\log \log n}})$
amortized bound on \op{Insert}, \op{Meld}, and \op{Decrease-Key} while
retaining a $O(\log n)$ upper bound on $\op{Extract-Min}$. 
The asymptotic cost of \op{Decrease-Key} remains unknown as Fredman's $\Omega(\log \log n)$ amortized lower bound and
Pettie's $O(2^{2 \sqrt{\log \log n}})$ amortized upper bound are the best known amortized bounds.

In this work we present a new analysis of pairing heaps
that proves, with the exception of the
\op{Decrease-Key} operation, pairing heaps share the same asymptotic
amortized runtimes per operation as Fibonacci heaps. Specifically, we show the
amortized cost of \op{Extract-Min}, \op{Delete}, and \op{Decrease-Key}
is $O(\log n)$ and the amortized cost of \op{Make-Heap}, \op{Insert}, and \op{Meld}
is $O(1)$.  Thus, compared to
the original analysis in \cite{pair} the
amortized upper bound of $O(\log n)$ for the \op{Insert} and \op{Meld}
operations 
 is improved to $O(1)$. 
Compared to the analysis of Pettie, the $O(2^{2 \sqrt{\log \log n}})$ amortized
bounds for \op{Insert} and \op{Meld} are improved to a constant,
while Pettie's $O(2^{2 \sqrt{\log \log n}})$ amortized bound for \op{Decrease-Key}
is tighter than our $O(\log n)$ amortized bound.

It should
be noted that Stasko and Vitter in \cite{vitter} introduced a variant
of pairing heaps, the auxiliary twopass method, and proved that this
structure supportes constant amortized time \op{Insert}. However, their analysis
explicitly forbade the \op{Decrease-Key} operation. Elmasry \cite{DBLP:conf/esa/Elmasry10} invented a different variant of pairing heaps that obtains constant amortized \op{Insert} and amortized $O(\log \log n)$ \op{Decrease-Key}; however Fredman's $\Omega(\log \log n)$ amortized lower bound on \op{Decrease-Key} does not apply to Elmasry's variant as they do not confirm to his model of \emph{generalized pairing heaps}.

Here it is shown that pairing heaps are shown to have constant amortized time
\op{Insert} and \op{Meld}, thus showing that pairing heaps
have the same amortized runtimes as Fibonacci heaps for all
operations but \op{Decrease-Key}. The \op{Decrease-Key} operation is allowed at an amortized cost of $O(\log n)$.

\section{Pairing Heaps}

\newcommand{\ssb}[1]{\begin{center}\scalebox{0.9}{#1}\end{center}}

\newcommand{\bt}{
\begin{tikzpicture}[child anchor=north,thick,scale=0.70]
    \tikzstyle{a}=[circle]
    \tikzstyle{b}=[circle,draw,fill=blue!20]
        \tikzstyle{c}=[isosceles triangle,draw,fill=yellow,shape border rotate=90]]
}

\newcommand{\pair}[2]{
 \draw[red,<->,draw, thick,shorten >=2pt, shorten <=2pt] (#1) .. controls +(up:1cm) and +(up:1cm) .. node[above,sloped] {pair} (#2);
}

\begin{figure}
\begin{center}
\begin{tabular}{p{2in}p{2in}}
\ssb{
\bt
\tikzstyle{level 1}=[sibling distance=10mm]
    \node[b] {1}
        child { node[b] {4} child{ node[c]{}} }
        child { node[b] {3} child{ node[c]{}} }
        child { node[b] {7} child{ node[c]{}} }
        child { node[b] {2}  child{ node[c]{}}}
        child { node[b] {5} child{ node[c]{}} }
        child { node[b] {9} child{ node[c]{}} }
        child { node[b] {8} child{ node[c]{}} }
        child { node[b] {6} child{ node[c]{}} }
    ;
\end{tikzpicture}
}
&
\ssb{
\bt
\tikzstyle{level 1}=[sibling distance=10mm]
    \node(r)[a] {}
        child { node[b] {4} edge from parent[draw=none] child{ node[c]{}} }
        child { node[b] {3} edge from parent[draw=none] child{ node[c]{}}}
        child { node[b] {7} edge from parent[draw=none] child{ node[c]{}}}
        child { node[b] {2} edge from parent[draw=none] child{ node[c]{}}}
        child { node[b] {5} edge from parent[draw=none] child{ node[c]{}}}
        child { node[b] {9} edge from parent[draw=none] child{ node[c]{}}}
        child { node[b] {8} edge from parent[draw=none] child{ node[c]{}}}
        child { node[b] {6} edge from parent[draw=none] child{ node[c]{}}} 
	;
	\pair{r-1}{r-2}
	\pair{r-3}{r-4}
	\pair{r-5}{r-6}
	\pair{r-7}{r-8}
\end{tikzpicture}
}
\\
(a) Remove the root. & (b) The first pairing pass groups the nodes in pairs, and pairs them.
\\
\ssb{
\bt
\tikzstyle{level 1}=[sibling distance=20mm]
\tikzstyle{level 2}=[sibling distance=10mm]
    \node(r)[a] {}
        child { node[b] {2}  edge from parent[draw=none]  child { node[b] {4} child{ node[c]{}} } child{ node[c]{}}}
        child { node[b] {3} edge from parent[draw=none]  child { node[b] {7} child{ node[c]{}} } child{ node[c]{}}}
        child { node[b] {5} edge from parent[draw=none]  child { node[b] {9} child{ node[c]{}} } child{ node[c]{}}}
        child { node[b] {6} edge from parent[draw=none]  child { node[b] {8} child{ node[c]{}} } child{ node[c]{}}}
    ;
    	\pair{r-3}{r-4}
\end{tikzpicture}
}
&
\ssb{
\bt
\tikzstyle{level 1}=[sibling distance=25mm]
\tikzstyle{level 2}=[sibling distance=10mm]
    \node(r)[a] {}
        child {node[b] {2} edge from parent[draw=none]  child { node[b] {4} child{ node[c]{}} } child{ node[c]{}}}
        child { node[b] {3} edge from parent[draw=none]  child { node[b] {7} child{ node[c]{}} } child{ node[c]{}}}
        child { node[b] {5} edge from parent[draw=none]  
          child { [sibling distance=7mm] node[b] {6}  child { node[b] {8} child{ node[c]{}} } child{ node[c]{}}}
          child { node[b] {9} child{ node[c]{}} } child{ node[c]{}}}        
         ;
         	\pair{r-2}{r-3}
\end{tikzpicture}
}
\\
(c) The second pairing pass incrementally pairs the right two nodes until a single tree is formed. 
& (d) Second pairing pass, continued.
\\
\ssb{
\bt
\tikzstyle{level 1}=[sibling distance=40mm]
\tikzstyle{level 2}=[sibling distance=20mm]
    \node(r)[a] {}
        child {  node[b]{2} edge from parent[draw=none]  child { node[b] {4} child{ node[c]{}} } child{ node[c]{}}}
        child { node[b] {3} edge from parent[draw=none]  
                   child { [sibling distance=10mm] node[b] {5}   
          child {[sibling distance=7mm] node[b] {6}  child { node[b] {8} child{ node[c]{}} } child{ node[c]{}}}
          child { node[b] {9} child{ node[c]{}} } child{ node[c]{}}}        
           child { node[b] {7} child{ node[c]{}} } child{ node[c]{}}}
         ;
   \pair{r-1}{r-2}
\end{tikzpicture}
}
&
\ssb{
\bt
    \node[a] {}
        child {[sibling distance=20mm] node[b] {2} edge from parent[draw=none]  child { node[b] {4} child{ node[c]{}} } child{ node[c]{}}
        child { [sibling distance=20mm] node[b] {3}  
                   child { [sibling distance=10mm] node[b] {5}   
          child {[sibling distance=7mm] node[b] {6}  child { node[b] {8} child{ node[c]{}} } child{ node[c]{}}}
          child { node[b] {9} child{ node[c]{}} } child{ node[c]{}}}        
           child { node[b] {7} child{ node[c]{}} } child{ node[c]{}}}
           }
         ;
\end{tikzpicture}
}
\\
 (e) Second pairing pass, continued.
& (f) Final heap that results from an \op{Extract-Min}.
\end{tabular}
\end{center}
\caption{Illustration of how an \op{Extract-Min} is executed on a heap where the root has eight children.}
\label{f:em}
\end{figure}

A pairing heap is a heap-ordered general tree. Here a min-heap is
assumed. The basic operation on a pairing heap is the pairing
operation, which combines two pairing heaps into one by attaching the
root with the larger key value to the other root as its leftmost
child. Priority queue operations are implemented in a pairing heap as
follows: \op{Make-Heap} creates a new single node heap.  \op{Find-Min}
returns the data in the root of the heap.  \op{Meld} pairs the roots
of the two heaps.  \op{Insert} creates a new node and pairs it with the root of the
heap it is being inserted into.  \op{Decrease-Key} breaks off the node
and its induced subtree from the heap (if the node is not the root), decreases
the key value, and then pairs it with the root of the heap.
\op{Delete} breaks off the node to be deleted and its subtree,
performs an \op{Extract-Min} on the subtree, and pairs the
resultant tree to the root of the heap.  \op{Extract-Min} is the only
non-trivial operation. An \op{Extract-Min} removes and
returns the root, and then, in pairs, pairs the remaining trees in the
resultant forest.  Then, the remaining trees from right to left are
incrementally paired. See Figure~\ref{f:em} for an example of an \op{Extract-Min} executing on a pairing heap.  All pairing heap operations take constant
actual time, except \op{Extract-Min} and \op{Delete}, which take time
linear in the number of children of the node to be removed. For the
purposes of implementation, pairing heaps are typically stored as a binary tree
using the leftmost child, right sibling correspondence. Unless
otherwise stated, the standard tree terminology will refer to the
general tree representation.

\section{Main Result}

\subsection{Overview}

We claim that in a pairing heap the amortized runtimes of
\op{Find-Min}, \op{Make-Heap},  \op{Meld},  and \op{Insert} are $O(1)$
and \op{Decrease-Key}, \op{Delete} and \op{Extract-Min} are $O(\log
n)$. We adopt the convention
that $\log$ refers to the binary logarithm.

Let $X=x_1, x_2, \ldots , x_m$ be a sequence of operations to be
executed on an initially empty collection of heaps.  The remainder of the
notation used is implicitly parameterized by a fixed, but arbitrary,
$X$. Let $a_i$ be the actual time to execute operation $x_i$. Let
$n_i$ be the size of the heap that $x_i$ acted upon after the
execution of $x_i$. We define the actual cost of an operation to be
the number of pairings performed plus 1. This cost measure is chosen
as the number of parings clearly dominates the asymptotic runtime of
each non-constant-cost operation. Let $\hat{a}_i$ be the amortized
cost that we wish to prove for operation $x_i$. Specifically,  
if $x_i$ is an $\op{Insert}$ let $\hat{a}_i=\mhc$,
if $x_i$ is an $\op{Find-Min}$ let $\hat{a}_i=1$,
if $x_i$ is an $\op{Meld}$ let $\hat{a}_i=0$,
if $x_i$ is an $\op{Make-Heap}$ let $\hat{a}_i=\mhc$,
if $x_i$ is an $\op{Delete}$ let $\hat{a}_i=\dac+\dmc \log n_i$,
if $x_i$ is an $\op{Decrease-Key}$ let $\hat{a}_i=\dcac +\dcmc \log n_i$ and
if $x_i$ is an $\op{Extract-Min}$ let $\hat{a}_i=\emmc \log (n_i+1)+\emac$.

Given this notation, the main theorem can be stated. The
time to execute any sequence on an initially empty pairing heap is
bounded by the sum of the previously stated amortized times for each
operation.  Formally,

\begin{theorem}\label{bigth}
$
\sum_{i=1}^m a_i
\leq
\sum_{i=1}^m \hat{a}_i 
$.
\end{theorem}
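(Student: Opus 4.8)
The plan is to set up an amortized analysis via the potential method, assigning to each pairing heap a nonnegative potential $\Phi$ so that $\sum a_i \le \sum \hat a_i$ follows in the standard way: since the collection starts empty ($\Phi = 0$) and potentials are always nonnegative, it suffices to show $a_i + \Phi_i - \Phi_{i-1} \le \hat a_i$ for each operation $x_i$. The key design decision is what $\Phi$ should be. Following the splay-tree lineage of pairing heaps, I would try a sum over nodes of $\log(\text{subtree size})$--style weights, but that alone is exactly what gives the original $O(\log n)$ bound for \op{Insert} and \op{Meld}, so it must be refined. The crucial observation is that a freshly inserted node, or a freshly melded root, becomes the \emph{leftmost} child of some node, and in the binary-tree (leftmost-child/right-sibling) representation the left spine plays a special role: the expensive part of an \op{Extract-Min} is precisely the two passes over the children list, i.e.\ over a right path in the binary representation. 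So I would look for a potential that charges $O(1)$ when a node is cheaply attached as a leftmost child but ``pre-pays'' for the eventual cost of pairing it away during some future \op{Extract-Min}.

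Concretely, I would partition the potential into two parts. The first part is a weight-balanced term: for each node $x$, a term like $\log s(x)$ where $s(x)$ is the size of $x$'s subtree in the general-tree representation (or some closely related quantity such as the size of the subtree rooted at $x$ in the binary representation, which captures $x$ together with its right siblings). The second, novel part is a ``credit'' term that counts something like the number of children of each node, or more precisely assigns $O(1)$ potential to each node that currently sits on the child list of its parent but was recently inserted/melded and has not yet been involved in a pairing pass --- so that when \op{Extract-Min} does $k$ pairings, it can release $\Theta(k)$ units of this credit potential to pay for the work, over and above what the $\log$ term already supplies. The analysis then proceeds operation by operation: \op{Make-Heap} and \op{Find-Min} change $\Phi$ by $O(1)$ and do $O(1)$ work; \op{Insert} and \op{Meld} add one new node to a child list, increasing the credit term by $O(1)$ and the $\log$ term by at most $O(1)$ at the affected root (here one must check the $\log$ term really only changes by $O(1)$, which is the point of using subtree sizes in the binary representation rather than the general tree); \op{Decrease-Key} cuts a subtree and re-pairs, a standard $O(\log n)$ argument; and \op{Extract-Min} is where the two potential terms combine, the $\log$ term absorbing the telescoping cost of the incremental right-to-left pairing pass (the classic splay-style $\sum \log$ telescoping, bounded by $O(\log n)$) and the credit term paying the linear $\Theta(k)$ overhead.

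The main obstacle I expect is making the two potential terms genuinely compatible --- i.e.\ finding a single definition of the credit term that (a) is provably nonnegative, (b) increases by only $O(1)$ on \op{Insert}/\op{Meld} and does not blow up on \op{Decrease-Key}, and (c) decreases by $\Omega(k)$ during a $k$-pairing \op{Extract-Min} \emph{after} accounting for the fact that the pairing passes also reshape the tree and hence change the $\log$ term and the credit term of many nodes at once. In particular, during the first pairing pass each losing root is buried under a winning root, and during the second pass a long right path is collapsed; one has to verify that the nodes newly exposed on child lists after these passes do not themselves carry so much new credit potential that the released credit is cancelled. A likely fix is to be careful about \emph{which} nodes are ``charged'' credit: only nodes whose left neighbor in the binary representation (i.e.\ immediately older sibling, or parent) satisfies some structural condition, so that the pairing passes, which always combine adjacent pairs, are guaranteed to discharge credit monotonically. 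I would also need a separate, easy lemma that the actual cost of \op{Delete} is dominated by an \op{Extract-Min} on a subtree plus one pairing, so \op{Delete} needs no new argument. Once the invariants on the credit term are pinned down, the per-operation inequalities should each reduce to a short calculation.
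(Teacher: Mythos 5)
Your high-level framing (potential method, a splay-style $\log(\text{subtree size})$ term plus a second corrective term) matches the paper, but the proposal misidentifies where the difficulty lies and the step it leans on is false. You assert that on \op{Insert}/\op{Meld} ``the $\log$ term really only changes by $O(1)$ \ldots{} which is the point of using subtree sizes in the binary representation rather than the general tree.'' This is backwards: in the leftmost-child/right-sibling representation, when a root of a size-$b$ heap loses a pairing to the root of a size-$a$ heap, it acquires the old leftmost child of the winner as its right child, so its binary-representation subtree size jumps from $b$ to $a+b-1$ and its rank potential jumps by $\Theta(\log(a+b))$ (and if it wins instead, its subtree size is $a+b$, with the same jump). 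This $\Theta(\log n)$ rank jump at a single pairing is exactly why the original analysis gives only $O(\log n)$ amortized \op{Insert}, and your credit term --- which is designed to discharge $\Theta(k)$ units during a $k$-pairing \op{Extract-Min} --- does nothing to absorb it. (The linear cost of \op{Extract-Min} is the lesser problem; it is largely paid for by the telescoping of the rank potential over the first pairing pass, as in the original analysis.)

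The paper's actual mechanism for the \op{Meld} rank jump is a per-heap \emph{negative} potential, $8-\hp\sum_{i=1}^{n}\log i$ in the number $n$ of ``white'' nodes, which drops by about $\hp\log(a+b)$ precisely when two heaps are melded, cancelling the rank gain of the two roots; it is recouped one $\hp\log(n+1)$ at a time by each \op{Extract-Min}, whose amortized budget is $O(\log n)$ anyway. Two ingredients of this are absent from your proposal and are hard to reinvent from your starting point: first, the potential is genuinely negative in mid-sequence, so your requirement that $\Phi$ be pointwise nonnegative rules it out --- the paper only needs $\Phi_m-\Phi_0\geq 0$ (Lemma~\ref{potmethod}); second, an offline two-coloring of nodes (``black'' nodes survive to the end of the operation sequence and carry no rank or heap potential) is what guarantees the debt incurred at \op{Meld} time is always repaid by the end, since only nodes destined to be removed contribute to the negative term. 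The coloring in turn forces the extra capture/weight/triple-white potentials and the block-of-six case analysis in Lemma~\ref{l:em} to pay for pairings involving black nodes, which is where most of the paper's technical work lives. As written, your plan would stall at the \op{Meld} inequality $a_i+\Phi_i-\Phi_{i-1}\leq O(1)$.
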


The potential method is used to prove this theorem.  We define below a
potential function $\Phi = \langle \Phi_0, \Phi_1, \ldots , \Phi_m \rangle$ that is simply a sequence of
real numbers. The potential method may be summarized in the following lemma:

\begin{lemma} \label{potmethod} If there exists a sequence $\Phi$ such that for all $i$, $1 \leq
i \leq m$, $\hat{a}_i \geq a_i + \Phi_i -\Phi_{i-1}$ and $\Phi_m -
\Phi_0 \geq 0$ then $\sum_{i=1}^m \hat{a}_i \geq \sum_{i=1}^m a_i$.
\end{lemma}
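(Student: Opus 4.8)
The plan is to prove this by a direct telescoping summation, since the hypotheses are designed precisely so that the per-step inequalities add up. First I would sum the assumed inequality $\hat{a}_i \geq a_i + \Phi_i - \Phi_{i-1}$ over all $i$ from $1$ to $m$, obtaining
\[
\sum_{i=1}^m \hat{a}_i \;\geq\; \sum_{i=1}^m \bigl( a_i + \Phi_i - \Phi_{i-1} \bigr) \;=\; \sum_{i=1}^m a_i \;+\; \sum_{i=1}^m \bigl( \Phi_i - \Phi_{i-1} \bigr).
\]
Next I would observe that the second sum on the right telescopes: all intermediate terms $\Phi_1, \Phi_2, \ldots, \Phi_{m-1}$ cancel, leaving $\sum_{i=1}^m (\Phi_i - \Phi_{i-1}) = \Phi_m - \Phi_0$.

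Finally I would invoke the second hypothesis, $\Phi_m - \Phi_0 \geq 0$, to drop this term and conclude $\sum_{i=1}^m \hat{a}_i \geq \sum_{i=1}^m a_i$, which is exactly the claim. Combined with Theorem~\ref{bigth}'s target bound, this reduces the whole paper to exhibiting a suitable potential sequence $\Phi$ and verifying the two hypotheses for the pairing-heap operations.

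There is no real obstacle in the lemma itself — the argument is a routine telescoping identity, and the only thing to be careful about is that the indexing of $\Phi$ runs from $\Phi_0$ through $\Phi_m$ so that each consecutive difference appears exactly once. The genuine difficulty, of course, is deferred: it lies entirely in the subsequent construction of $\Phi$ and in checking $\hat{a}_i \geq a_i + \Phi_i - \Phi_{i-1}$ for \op{Extract-Min}, where $a_i$ can be as large as the degree of the removed root and the potential drop from the two pairing passes must be shown to absorb all but $O(\log n_i)$ of that cost.
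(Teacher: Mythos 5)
Your proof is correct and is exactly the ``simple algebraic manipulation'' the paper invokes without writing out: sum the per-operation inequality, telescope the potential differences to $\Phi_m - \Phi_0$, and discard that term using the hypothesis $\Phi_m - \Phi_0 \geq 0$. Nothing further is needed.
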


This lemma, which summarizes the potential method, may be proved by
simple algebraic manipulation. More details on the potential method may
be found in \cite{amortized}.

The proof of Theorem~1 proceeds as follows. Section~\ref{sec:pot} is
devoted to defining the potential function~$\Phi$. This potential function is complex and has several components.  We then prove, in a
sequence of lemmas, that $\hat{a}_i \geq a_i + \Phi_i -\Phi_{i-1}$ for
each type of operation $x_i$. Finally, we state in a lemma that
$\Phi_m - \Phi_0 \geq 0$. These lemmas, according to Lemma~\ref{potmethod}
are sufficient to prove Theorem~\ref{bigth}. Analysis of \op{Insert} is not presented separately, as \op{Insert} is just a \op{Make-Heap} followed by a \op{Meld}. Similarly, 
the analysis of \op{Delete} is not presented separately, as \op{Delete}'s can be implemented as a \op{Decrease-Key} to negative infinity followed by a \op{Extract-min}.

\subsection{The potential function} \label{sec:pot}

For the analysis, a \emph{color}, black or white, is assigned to every node,
and a \emph{weight} is assigned to those nodes colored white. A node is \emph{black}
if it will remain in the forest of heaps at the end of execution of
sequence $X$, and \emph{white} otherwise. The color of a node never changes, since it is determined as a function of the entire fixed sequence of operations. We say that a white node has a weight of \emph{heavy} if the number
of white nodes in its left subtree in the binary representation is
at least the number of white nodes in its right
subtree. Roots are always heavy by this definition and every node in a
heap of size $n$ can have at most $\log n$ heavy children.  We
say that the weight of a white node that is not heavy is \emph{light}.

We say a node has been \emph{captured} if its parent is black.  A captured
node must have a \op{Decrease-Key} performed on it later in the
execution sequence before it is involved in any pairings. White
captured nodes must have \op{Decrease-Key} or a \op{Delete}
performed on them later in the execution sequence.

The \emph{node potential} of a white node is the sum of four components: rank
potential, weight potential, triple white potential, and capture
potential. Let $s(x)$ be the number of white nodes in the induced
subtree of $x$ in the binary representation. The \emph{rank potential} of a
white node $x$, $r(x)$, is $\rk \log s(x)$. If a node is white and has immediate
right and left siblings that are also white, then the node is referred
to as a \emph{triple white} and has a \emph{triple white potential} of $0$; if it is not a triple white it
has $\gpc$ units of triple white potential. White heavy nodes have a
\emph{weight potential} of $0$, and white light nodes have $\gpc$ units of
weight potential.  We assign captured nodes a \emph{capture potential} of $0$
and non-captured nodes $\gpc$ units of capture potential.  Black nodes
are defined to have 0 rank potential, weight potential and triple-white
potential.  Thus, the node potential of a black node is only comprised of
its capture potential. We also assign each heap a \emph{heap potential} which
is $\hpac-\hp \sum_{k=1}^n \log k $ where $n$ is the number of white nodes
in the heap (note that this could be negative in the middle of the execution of $X$, but is zero at the beginning and positive at the end. See Section~\ref{s:gl}).  The potential of a forest of heaps, $\Phi_i$, is the sum
of the node potentials of the nodes in the heaps and the heap potentials of the
heaps as a function of the state of the structure after the execution
of $x_i$.

It must be noted that the potential function exists for the sole
purpose of analysis. To implement a pairing heap neither the potential
function, nor its constituents such as node color, need to be stored.

The intuition behind this potential function comes from several places.
The rank potential is taken from the original pairing heap analysis \cite{pair}, and is identical to that used in splay trees \cite{splay}.
The notion of heavy and light was inspired by a similar idea used in skew heaps and their analysis \cite{skew}. 
The notion of a captured node is due to the fact that a pairing of a black and a white node where the black node has a smaller key value can be paid for by the resultant \op{Decrease-Key} or \op{Delete} operation that must be performed on the white node before it takes place in any additional pairings. The triple-white potential is a new invention, and was needed to handle a case where locally a mixture of black and white nodes transitions to all white nodes during \op{Extract-Min}; this case where the triple-white potential is vital appears as Case~4 of the local analysis of Lemma~\ref{l:em}.

The amortized cost of each operation is now calculated using this
potential function. For each operation it is proved that $\hat{a}_i
\geq a_i + \Phi_i -\Phi_{i-1}$, that is, the amortized cost of
an operation is at least the actual cost plus the change in
potential. In order to analyze the change in potential, we analyze the
change in each of the components that are summed to make
the potential function. For each operation, only a limited number of
clearly defined nodes may have their node potential change. These are the
nodes that have had their parents change, or white nodes that have had
a neighboring sibling or white descendant in the binary representation
change.

\subsection{Global loss} \label{s:gl}

\begin{lemma}
$\Phi_m-\Phi_0 \geq 0$
\end{lemma}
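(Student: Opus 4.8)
The plan is to split the claim into the two statements $\Phi_0 = 0$ and $\Phi_m \ge 0$, each of which follows directly from the definitions in Section~\ref{sec:pot}.

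For $\Phi_0$: the sequence $X$ is executed on an \emph{initially empty} collection of heaps, so before $x_1$ there are no heaps and no nodes. Since $\Phi_0$ is a sum of node potentials over an empty set of nodes plus heap potentials over an empty set of heaps, we get $\Phi_0 = 0$.

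For $\Phi_m$: I would invoke the defining property of the coloring. A node is white exactly when it is \emph{absent} from the forest at the end of $X$; equivalently, every node still present after $x_m$ is black. A black node is defined to have zero rank potential, zero weight potential, and zero triple-white potential, so its node potential equals its capture potential, which is either $0$ or $6$ and in either case nonnegative. Moreover, every heap surviving to time $m$ contains no white nodes, so its heap potential is $8 - \hp\sum_{i=1}^{0}\log i = 8 \ge 0$. Hence $\Phi_m$ is a sum of nonnegative quantities, so $\Phi_m \ge 0$, and combining with $\Phi_0 = 0$ yields $\Phi_m - \Phi_0 \ge 0$.

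There is no real obstacle in this lemma: the work was done in advance, when the potential function was designed with a nonnegative additive constant ($8$) per heap and a nonnegative capture term per node, precisely so that an all-black final configuration cannot have negative potential even though intermediate configurations can. The $-\hp\sum_{i}\log i$ term of the heap potential is the only source of negativity, and it vanishes exactly when the heap has no white nodes, which is guaranteed at time $m$. The only point to state carefully is that rank, weight, and triple-white potentials are all declared to be $0$ on black nodes, so none of them can make $\Phi_m$ negative; this is the ``sanity check'' on the constants chosen in Section~\ref{sec:pot}.
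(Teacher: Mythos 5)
Your proof is correct and follows essentially the same route as the paper's: $\Phi_0=0$ since the structure starts empty, and $\Phi_m\ge 0$ because all surviving nodes are black, the only potentially negative component (the $-\hp\sum\log i$ term of the heap potential) vanishes when a heap has no white nodes, and all remaining components are nonnegative. Your version just spells out the per-component nonnegativity a bit more explicitly than the paper does.
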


\begin{proof}
Since the structure is initially empty, $\Phi_0=0$. By definition,
after the execution of all $m$ operations in $X$, there are no white nodes
left in the collection of heaps.  Since the only possible negative
component of the potential function, the heap potential, requires white
nodes to be negative, $\Phi_m \geq 0$ and the lemma holds.
\qed
\end{proof}

\subsection{Make-heap}

\begin{lemma}
If $x_i$ is a \op{Make-Heap}, $a_i + \Phi_i -\Phi_{i-1} \leq \mhc $.
\end{lemma}

\begin{proof}

In a \op{Make-Heap}, the only change in potential is caused by the
introduction of the new one-node heap. The potential of all
existing nodes and heaps is unchanged. The actual cost and changes in potential can be accounted for as follows:

\begin{description}

\item[Actual cost:] 1. No pairings are performed.

\item[Change in rank potential:] 0. The newly created root, if white, has one
node in its induced subtree, and thus has $\rk \log 1=0$ units of rank
potential.  If the newly created root is black, black nodes by
definition have zero rank potential.

\item[Change in weight potential:] 0. If the newly created node is white, it
will be heavy and will have 0 units of weight potential.  If the newly
created node is black, it does not have weight potential.

\item[Change in triple white potential:] $\gpc$. The new node is not a triple-white.

\item[Change in capture potential:] $\gpc$. The new node is not captured.

\item[Change in heap potential:] $\hpac$. The new heap has a potential of $\hpac$.
\end{description}

Summing the actual cost and the change in potential yields the amortized
cost: $a_i + \Phi_i -\Phi_{i-1}=\mhc$.
\qed
\end{proof}

\subsection{Meld}

\begin{lemma}
If $x_i$ is a \op{Meld}, $a_i + \Phi_i -\Phi_{i-1} \leq 0 $.
\end{lemma}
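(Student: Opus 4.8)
The plan is to mirror the bookkeeping used in the \op{Make-Heap} lemma: a \op{Meld} performs exactly one pairing, so the actual cost is $a_i = 2$, and we must show the total change in potential is at most $-2$. The operation takes two heaps $H_1$ and $H_2$ with roots $u$ and $v$ (say $u$ has the smaller key, so $v$ becomes the new leftmost child of $u$), and produces one heap. Only a constant number of nodes can have their node potential change — namely $u$, $v$, and possibly the old leftmost child of $u$ (which gains $v$ as a left sibling in the binary representation) — plus the two old heap potentials are replaced by one new heap potential. Everything else is untouched, so I would account for the change component by component exactly as in the previous lemma.

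The main work is the heap potential term. If $H_1$ has $n_1$ white nodes and $H_2$ has $n_2$ white nodes, the new heap has $n_1 + n_2$ white nodes, so the change in heap potential is
\[
\left( 8 - \hp \sum_{i=1}^{n_1+n_2} \log i \right) - \left( 8 - \hp \sum_{i=1}^{n_1} \log i \right) - \left( 8 - \hp \sum_{i=1}^{n_2} \log i \right) = -8 - \hp \sum_{i=n_1+1}^{n_1+n_2} \log i + \hp \sum_{i=1}^{n_2} \log i.
\]
Since $\log i \ge \log(i - n_1)$ for $i > n_1$, the two remaining sums cancel favorably and this quantity is at most $-8$. (If one of the heaps is empty the bound is even easier, and if both roots are black the heap-potential argument still goes through.) That $-8$ is the surplus that pays for the actual cost of $2$ and absorbs any small increases in the node-potential components.

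For the node potentials I would check each of the four components in turn. The rank potential: $r(v)$ can only decrease when $v$ goes from being a root of a size-$s(v)$ tree to a non-root, and $r(u)$ may increase by at most $\rk \log \frac{s(u)+s(v)}{s(u)}$; but since rank potential in a single heap is bounded by $O(\log n)$ and, crucially, the heap potential already dominated a $\sum \log i$ term, I expect the intended argument is simply that the increase in $r(u)$ plus the $6$-unit increases from the other components are all swallowed by the $-8$ from the heap potential. The weight potential, triple-white potential, and capture potential of $u$, $v$, and $u$'s old leftmost child can each change by at most $6$ in the bad direction, giving a bounded positive contribution. The capture potential of $v$ in particular does not increase, since $v$'s new parent $u$ could be white or black — if $u$ is black then $v$ becomes captured and loses $6$ units, which only helps.

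The step I expect to be the real obstacle is controlling the rank-potential increase of $u$ against the heap-potential decrease, and confirming that the constant $8$ in the heap potential (together with the coefficient $\hp = 36$ chosen to align with $\rk = 18$) is large enough to cover the worst joint case — in particular the case $n_1 = n_2$, where $\sum_{i=n_1+1}^{2n_1}\log i - \sum_{i=1}^{n_1}\log i$ is smallest. I would verify that this difference is $\ge \sum_{i=1}^{n_1} \log\frac{i+n_1}{i} \ge 0$, so the heap-potential change is $\le -8$ unconditionally, and then simply tally: actual cost $2$, plus at most $6$ from some node-potential increases, minus at least $8$ from the heap potential, yielding $a_i + \Phi_i - \Phi_{i-1} \le 0$. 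The delicate part is making sure no more than one $6$-unit node-potential increase (or an equivalent bounded amount) occurs, which is where the precise definitions of heavy/light and triple-white for the newly adjacent nodes $v$ and $u$'s old leftmost child must be examined carefully.
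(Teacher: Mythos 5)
There is a genuine gap, and it is exactly at the step you flag as "the real obstacle" without resolving it: the rank-potential increase of a \op{Meld} is \emph{not} a bounded constant, and your heap-potential bound of $-8$ is too weak to absorb it. First, your claim that $r(v)$ can only decrease is wrong: $s(\cdot)$ is defined on the binary (leftmost-child/right-sibling) representation, and after the pairing the losing root $v$ acquires the winner's old leftmost child as its right child in that representation, so $v$'s induced binary subtree absorbs essentially all of the winner's old heap. Both roots can therefore see their rank potential rise to as much as $\rk\log(a+b)$ each, for a total increase of up to $\rkk\log(a+b)$, where $a$ and $b$ are the white-node counts of the two heaps. A constant surplus of $8$ cannot pay for this. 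The paper's proof closes the gap by extracting a logarithmic, not merely constant, drop from the heap potential: writing the change as
$$-8 - \hp\sum_{i=1}^{a+b}\log i + \hp\sum_{i=1}^{a}\log i + \hp\sum_{i=2}^{b}\log i \;\leq\; -8 - \hp\sum_{i=1}^{a+b}\log i + \hp\sum_{i=1}^{a+b-1}\log i \;=\; -8-\hp\log(a+b),$$
i.e.\ the two smaller sums, merged, are dominated by the first $a+b-1$ terms of the big sum, leaving a leftover of $-\hp\log(a+b)$. Since $\hp=\rkk$, this exactly cancels the worst-case rank gain of the two roots, and the residual $-8$ pays for the actual cost of $2$ and the single possible $+6$ in weight potential (only the losing root can go from heavy to light; triple-white and capture potential cannot increase). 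Your derivation, which only shows the heap potential drops by at least $8$ via the comparison $\log i \geq \log(i-n_1)$, discards precisely the $\hp\log(a+b)$ term that the argument needs. (The paper also handles separately the degenerate case where one heap has no white nodes, where the rank potential does not change and $-8$ alone suffices; your parenthetical remark covers this correctly.)
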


\begin{proof}
The only nodes in a \op{Meld} that can change potential are the two old
roots, and the (former) leftmost child of the new root. We use $a$ and $b$ to
denote the number of white nodes in the heaps being melded. 

\begin{description}
\item[Actual cost:] 2, as 1 pairing is performed.

\item[Change in weight potential:] $\leq \gpc$. Only the old root that lost the pairing can have
its weight potential change.

\item[Change in triple white potential:] $\leq 0$. 
Only one node will have its left sibling change, which is a necessary condition for a change of triple-right. Let $x$ be the root that wins the pairing and let $y$ be the root that loses the pairing. Observe that $x$'s left child will now have $y$ as its left sibling. Observe that the change in triple-white potential $x$'s left child can only be negative,
since $x$'s left child is not a triple white before the beginning of this
operation.

\item[Change in capture potential:] $\leq 0$. No gain is possible
since no node escapes capture in a \op{Meld}. A loss is possible as the new leftmost child of the new root can be captured.
\end{description}

\noindent
The remainder of the analysis breaks into two cases.

\begin{description}
\item[Case 1:] At least one of the two heaps being melded only contains black nodes.

\begin{description}

\item[Change in rank potential:] 0. No white nodes have any change in their
white descendants.

\item[Change in heap potential:] -\hpac. The heap with all black nodes (or one of
them if both have only black nodes) has a heap potential of \hpac, while
the resultant heap has the same potential as the other heap.

\end{description}
\item[Case 2:] Each of the heaps being melded contain some white nodes.

\begin{description}

\item[Change in rank potential:] $\leq \rkk \log (a+b)$.
The only two nodes that could possibly change rank potential would be the two
roots, and their rank potential could rise to be at most $\rk \log (a+b)$ each.

\item[Change in heap potential:] $\leq -\hp \log (a+b) - \hpac$. 

The change in heap potential is given by the expression:

$ \hpac - \hp \sum_{i=1}^{a+b} \log i -\hpac + \hp \sum_{i=1}^{a} \log i -\hpac + \hp \sum_{i=1}^b \log i$

$= -\hpac - \hp \sum_{i=1}^{a+b} \log i + \hp \sum_{i=1}^{a} \log i + \hp \sum_{i=2}^b \log i$

$\leq - \hpac - \hp \sum_{i=1}^{a+b} \log i + \hp \sum_{i=1}^{a+b-1} \log i$

$\leq -\hpac -\hp \log (a+b).$
\end{description}
\end{description}

Summing the actual cost and the change in potential for both Case 1 and
Case 2 yields the same amortized cost: $a_i + \Phi_i -\Phi_{i-1}\leq  0$.
\qed
\end{proof}

\subsection{Decrease-key}

\begin{lemma}
If $x_i$ is a \op{Decrease-Key}, $a_i + \Phi_i -\Phi_{i-1} \leq \dcac +\dcmc \log n_i$.
\end{lemma}

Actual cost: 2, since one pairing is performed.

Rank Potential: $\leq \rk \log n_i$. The node on which the \op{Decrease-Key} is performed
could gain as most as $\rk \log n_i$ in rank potential. The former ancestors of the node in the binary representation could have their rank potential decrease.

Weight potential: $\leq \gpc  \log n_i  + 12$. The node the
decrease key is performed on can gain $\gpc$ in weight potential if it becomes heavy.  Also,
on the path $p$ in the binary representation from the node on which the \op{Decrease-Key} is to be
performed to the root, the removal of the node and its subtree may
cause some nodes to change their status from light to heavy or vice
versa.  Only changing from heavy to light causes a potential gain, and in such nodes the path $p$ goes through the node and its left child in the binary representation. After the \op{Decrease-Key} is performed there are only $\log n$ such nodes  because 
the $s(\cdot)$ values on any ancestor-descendent path in the binary representation is non-increasing in general,
and decreases by at least a factor of two from a light node to its left child in the binary representation. 
Thus this gain of $\gpc$ can only happen in at most $ \log n_i +1$ nodes. 

Capture potential: $\leq \gpc$.  There can only be a change of $\gpc$
in capture potential if the node which the \op{Decrease-Key} is
performed on was captured. 

Triple white: $\leq \gpc$. Among the node on which the \op{Decrease-Key} is
performed, and its two former siblings to the left and right, a total
of $\gpc$ units of triple white potential can be gained.  The former left child of the new root, will now be the second-leftmost child and could become a triple-white, thus releasing $\gpc$ units of potential.

Summing the actual cost and the change in potential yields the amortized
cost: $a_i + \Phi_i -\Phi_{i-1}\leq \dcac + \dcmc \log n_i$.

\subsection{\op{Extract-Min}} \label{sec:em}

\begin{lemma}\label{l:em}
Amortized cost of \op{Extract-Min} is $\emmc \log (n_i+1) + \emac$.
\end{lemma} 
\begin{proof}
Actual cost: If there are $c$ children of the root, the actual cost is $c$, since $c-1$
pairings are performed. 

As the analysis of \op{Extract-Min} is long, it is split into several
parts.  The first part examines the change of heap potential. Next, we
note that the only nodes that can have their node potential change are
the old root which is removed, the children of the old root, and the
grandchildren of the old root. The second part bounds the changes of
node potential of the old root. The third part bounds the changes of
triple-white potential of the grandchildren of the root, since this is
the only type of node potential change possible in these nodes.  The
analysis of the node potential change in the children of the old root
is the most complicated step, and it is presented in two parts: global
and local. In the global part we analyze some potential changes over
the whole structure, while in the local section we analyze how blocks
of six children of the root are affected by the remaining potential
changes. In the global section we analyze changes in rank potential,
gains in weight potential, and gains of capture potential. Part of this analysis, which is a variation of that of that which can be found in the original pairing heap paper \cite{pair} and mimimcs the original analysis of splay trees \cite{splay}, is presented separately in Section~\ref{rankchange}. In the
local section we analyze losses of weight potential, and changes of
triple white potential and losses of capture potential. We let $w$ denote the number of white-white parings in the first pairing pass (Nodes are colored. Pairings involve two nodes. A white-white pairing is a pairing of two white nodes).

\begin{description}

\item[Heap potential:] $\hp \log (n_i+1)$. This is caused by the heap's size
being reduced from $n_i+1$ to $n_i$.

\item[Node potential of root:] $\leq 0$. The removal of the root itself causes no
potential gain, since it has nonnegative potential

\item[Triple white potential of grandchildren of the old root:] $\leq 0$.  Some
grandchildren of the old root that are white, have white right
siblings and no left siblings may, through a pairing that their
parent is involved in, acquire a left white sibling and become a triple-white.
This can only cause a loss of potential.

\item[Global changes:]
\

\begin{description}

\item[Change of rank potential:] $\leq \rkkk \log n_i -  \rkk w$. 
The derivation of this is a variation of
the original pairing heap analysis of
\cite{pair}. This is topic of Section~\ref{rankchange} and appears as Lemma~\ref{l:ema}.

\item[Gains in weight potential:] $\leq \gpc \log n_i+\gpc$.  There are at most $\log
n_i+1$ heavy children of the root, since the $s(\cdot)$ value of the right sibling of a heavy node decreases by at least a factor of two, and so the potential gain caused by
heavy nodes becoming light is at most $\gpc \log n_i+\gpc$.

\item[Gains of capture potential:] 0.
The \op{Extract-Min} operation can cause
no increase in capture potential. An increase in the capture 
potential can only happen when a previously captured node becomes uncaptured.
However, since the root bas white, none of the children of the root
can have black parents, and thus none were captured.
\end{description}

\item[Local changes:]
In order to analyze other changes in potential (changes in triple
white potential, losses of weight potential caused by a node becoming
heavy, and changes in black nodes' potential) we break the $c$ children of
the root into blocks of six nodes, excluding the rightmost two
nodes. At most $\nbn$ nodes can not be included in this analysis, and
they could incur a potential gain of up to $\nbpe$ each for at total of $\nbp$---white node's triple white and weight are bounded by $\gpc$ each, and black nodes only have capture potential which is at most $\gpc$. In analyzing these
specific potential changes in each block of six, there are seven cases. See Figure~\ref{f:cases} for a decision tree showing how to determine which case applies.

Since for each case we are only considering losses in weight
potential and capture potential, in some parts we do not explicitly
state them and simply assume they are nonpositive. Triple white potential
must be considered in each part because we are considering
gains, as well as losses, however we show gains are limited to Case 1.

\newcommand{\ey}{edge from parent node[left] {yes} }
\newcommand{\eno}{edge from parent node[right] {no} }

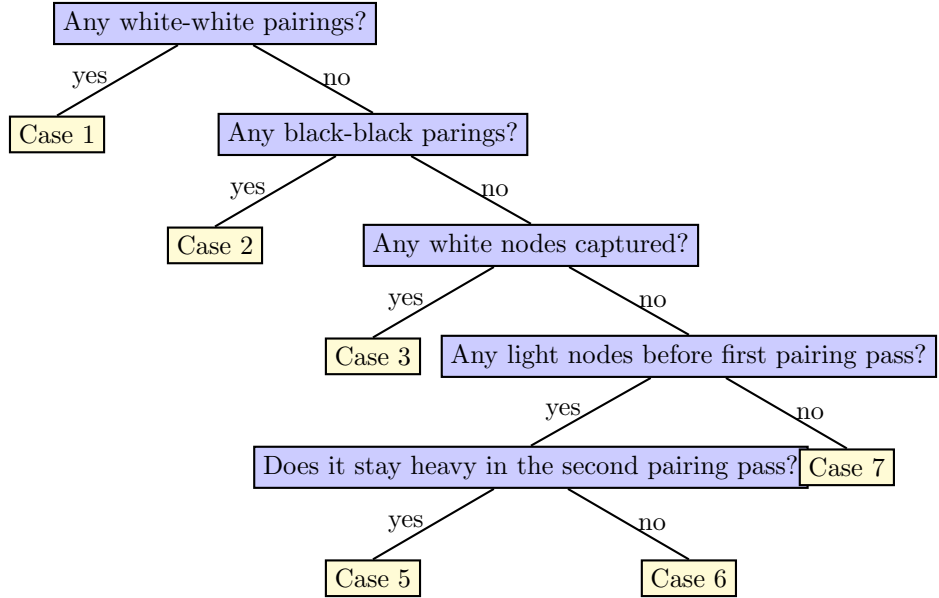
\begin{figure}
\begin{center}
\begin{tikzpicture}[child anchor=north,thick,scale=0.70,level distance=5pc]
   
    \tikzstyle{b}=[circle,draw,fill=blue!20]
        \tikzstyle{c}=[isosceles triangle,draw,fill=yellow,shape border rotate=90]]
        \tikzstyle{a}=[draw,fill=yellow!20]
        \tikzstyle{d}=[draw,fill=blue!20]
\tikzstyle{level 1}=[sibling distance=60mm]
    \node [d]{Any white-white pairings?}
        child { node[a]{Case 1} \ey }
        child { node[d]{Any black-black parings?}
          child{ node[a]{Case 2} \ey }
          child{ node[d]{Any white nodes captured?} 
            child{ node[a]{Case 3} \ey}
            child{ node[d]{Any light nodes before first pairing pass?}
              child{ node[d]{Does it stay heavy in the second pairing pass?} 
                child{ node[a]{Case 5} \ey}
                child{ node[a]{Case 6} \eno}
              \ey}
              child{ node[a]{Case 7} \eno}
            \eno }
          \eno} 
        \eno}
    ;
\end{tikzpicture}
\end{center}
\caption{Determination of which case applies in the local part of the proof of Lemma~\ref{l:em}.}
\label{f:cases}
\end{figure}

\begin{description}
\item[Case 1:] There is at least one white-white pairing in the first pairing
pass.

Triple white: $\leq \gpcnib$. The only gains in potential are the possible
gain of $\gpc$ units of triple white potential for each white involved in a
white-white pairing. This is the only case where a gain in potential,
among the components of the potential function under consideration is
possible. A gain in triple white potential can only occur when a child
of the former root is a triple white, and thus has white left and
right siblings. However, this guarantees that any triple white be
involved in a white-white pairing and this falls into this case. Thus,
increases in triple white potential can not occur in the following
cases and will not be considered.
\\ \\
Observe that if Case 1 does not apply all pairings in the first pairing pass are white-black or black-black.
\\ \\
\item[Case 2:] There is at least one
black-black pairing in the first pairing pass.

Capture potential: $\leq -\gpc$.
The black-black pairing(s) causes a loss of at least $\gpc$ units of
potential.
\\ \\
Observe that if Cases 1-2 do not apply then all pairings in the first pairing pass are black-white.
\\ \\
\item[Case 3:] At
least one of the three white nodes is captured.

Capture potential: $\leq -\gpc$. The capturing of the node(s) causes a
capture potential loss of at least $\gpc$.
\\ \\
Observe that if Cases 1-3 do not apply then all pairing in the first pairing pass are back-white, and the white nodes win all three first-pairing-pass pairings.
\\ \\
\item[Case 4:] All three white nodes participate in the second pairing pass and lose.

Triple white potential: $\leq -\gpc$.  Having all three loose the pairings
in the second pairing pass causes a loss of potential of $\gpc$, due to the
change of status of the middle white node to a triple white.
\\ \\
Observe that if Cases 1-4 do not apply then all pairing in the first pairing pass are back-white, and the white nodes win all three first-pairing-pass pairings and at least one of them win a second-pairing pass pairing. Case 5-7 are based on the weight of such a node. 
\\ \\
\item[Case 5:] Of one of the three white three nodes that participate in the second pairing pass
and win, its weight is light before the operation and heavy after both pairing passes.

Weight potential: $\leq -\gpc$
The light node becomes heavy, as all nodes previously on its right are
now in its subtree. Additional nodes that were to the node's left may
also be added to its subtree, but this just makes it more heavy.  This
causes a loss of $\gpc$ units of weight potential.

\item[Case 6:] Of one of the three white three nodes that participate in the second pairing pass
and win, its weight is light before the operation and heavy after the first pairing pass, and light after the second pairing pass.

We may assume no potential gain. This case can only happen $\log n_i$ times,
because in between the first and second pairing passes there are at most $\log n_i$ heavy roots.

\item[Case 7:] Observe that if Cases 1-6 do not apply then all pairing in the first pairing pass are back-white, and the white nodes win all three first-pairing-pass pairings and at least one of them win a second-pairing pass pairing, and that all such nodes are heavy at the beginning of the operation.

We may assume no potential gain. This case can only happen $\log n_i$ times,
because there are at most $\log n_i$ heavy children of the root.

\end{description}

\end{description}

We now sum together the potential changes covered by these seven cases.
Each application of Case~1 causes a potential increase of $\gpcnib$, and covers at least one white-white pairing. Thus the total gain caused by Case~1 is at most $\gpcnib w$. Since there are at least $\lfloor \frac{c-2}{\nib} \rfloor \geq \frac{c}{\nib}-2$ blocks of \nib, and none of Cases~2-6 have white-white pairings in the first pairing pass, there are at least $\frac{c}{\nib} -2 - w$ blocks of six covered by Cases~2-6.
Each of cases 2-5 has a loss of at least $6$, and Cases~6-7 have no gain. Since Cases~6 and~7 can each only happen in $\log n_i$ blocks, that means the total potential loss of Cases~2-6 is at least $\gpc(\frac c{\nib} -2 - w -2\log n_i)$.

We now summarize the potential changes discussed, and bring together all parts of our analysis to bound the total potential gain as at most:

\begin{align*}
\underbrace{\hp\log (n_i+1)}_{\text{Heap potential}}+
\underbrace{
\overbrace{\rkkk \log n_i - \rkk w}^{\text{Rank}}+ 
\overbrace{\gpc \log n_i +\gpc}^{\text{Weight}}
}_{\text{Global}}
\\+
\underbrace{
\overbrace{\nbp}^{{\text{Not in block}}}+
\overbrace{\gpcnib w}^{\text{Case~1}}-
\overbrace{\gpc\left(\frac c{\nib} -2 - w -2\log n_i\right)}^{\text{Cases~2-7}}
}_{\text{Local}}.
\end{align*}

This can be simplified, observing $-\rkk w + \gpcnib w + \gpc w=0$, to give an upper bound on the potential gain of
$\emmc \log (n_i+1) -c +\emacb$.
Then summing the actual
cost of the \op{Extract-Min} operation, $c-1$ with the maximum potential gain
yields the amortized
cost: $a_i + \Phi_i -\Phi_{i-1} \leq \emmc \log (n_i+1)+\emac$.
\qed
\end{proof}

\subsection{Change in rank potential}
\label{rankchange}

\newcommand{\ssbb}[1]{\begin{center}\scalebox{0.9}{#1}\end{center}}

\newcommand{\btb}{
\begin{tikzpicture}[child anchor=north,thick,scale=0.70]
    \tikzstyle{a}=[circle]
    \tikzstyle{b}=[circle,draw,fill=blue!20,inner sep=1pt]
    \tikzstyle{d}=[circle,draw,fill=blue!20,inner sep=3pt]
        \tikzstyle{c}=[isosceles triangle,draw,fill=yellow,shape border rotate=90]]
            \tikzstyle{w}=[fill=white]
    \tikzstyle{k}=[fill=black,text=white]

}

\newcommand{\pairb}[2]{
 \draw[red,<->,draw, thick,shorten >=2pt, shorten <=2pt] (#1) .. controls +(up:1cm) and +(up:1cm) .. node[above,sloped] {pair} (#2);
}

\begin{figure}
\begin{center}
\begin{tabular}{p{2in}p{2in}}
\ssbb{
\btb
\tikzstyle{level 1}=[sibling distance=13mm]
    \node[a] {}
    	   child { node[c]{}edge from parent[draw=none] }
        child { node[a]{$\cdots$} edge from parent[draw=none]}
        child { node[b] {$a_{\eta}$} child{ node[c]{}} edge from parent[draw=none]}
        child { node[b] {$b_{\eta}$} child{ node[c]{}} edge from parent[draw=none]}
        child { node[b] {$c_{\eta}$} child{ node[c]{}} edge from parent[draw=none]}
        child { node[a]{$\cdots$} edge from parent[draw=none]}
    	   child { node[c]{}edge from parent[draw=none]}
    ;
\end{tikzpicture}
}
&
\ssbb{
\btb
\tikzstyle{level 1}=[sibling distance=13mm]
\tikzstyle{level 2}=[sibling distance=11mm]
    \node[a] {}
    	   child { node[c]{}edge from parent[draw=none]}
        child { node[a]{$\cdots$} edge from parent[draw=none]}
        child { node[b] {$w_{\eta}$} 
          child { node[b] {$l_{\eta}$} child{ node[c]{}} }
          child{ node[c]{}} edge from parent[draw=none]}
        child { node[b] {$c_{\eta}$} child{ node[c]{}} edge from parent[draw=none]}
        child { node[a]{$\cdots$} edge from parent[draw=none]}
    	   child { node[c]{}edge from parent[draw=none]}
    ;
\end{tikzpicture}
}
\\
(a) Before pairing, general heap view.&
(b) After pairing, general heap view.
\\
\ssbb{
\btb
\node[a]{}
  child {node [a]{} edge from parent[draw=none]}
  child {node [b]{$a_{\eta}$ } 
   child { node[c]{} }
   child { node[b]{$b_{\eta}$} 
     child { node[c]{} }
     child { node[b]{$c_{\eta}$} 
       child { node[c]{} }
      child { node[c]{} }    
     }}}
;
\end{tikzpicture}
}
&
\ssbb{
\btb
\tikzstyle{level 2}=[sibling distance=20mm]
\tikzstyle{level 3}=[sibling distance=10mm]
\node[a]{}
  child {node [a]{} edge from parent[draw=none]}
  child {node [b]{$w_{\eta}$}
       child {node [b]{$l_{\eta}$}     
         child { node[c]{} }
         child { node[c]{} }
       }
       child {node [b]{$c_{\eta}$}     
         child { node[c]{} }
         child { node[c]{} }
       }       
  }
;
\end{tikzpicture}
}
\\
(c) Before pairing, binary view.&
(d) After pairing, binary view.
\end{tabular}
\caption{Illustration of the notation used to analyze a single pairing. Nodes
$a_{\eta}$ and $b_{\eta}$ are paired, and $w_{\eta}$ wins the pairing and $l_{\eta}$ loses the pairing.
Pre- and post-pairing pictures are presented, in both the general tree and binary views.
}
\label{f:onepair}
\end{center}
\end{figure}

In this subsection we examine the change in rank potential in an
\op{Extract-Min} operation, which was postponed in Section~\ref{sec:em}.
The proof of this bound follows in spirit the much simpler related proof in the original pairing heap paper~\cite{pair}, but is adjusted to take into account the presence of node coloring.

Pairings in an \op{Extract-Min} are always performed between a node and its right sibling. Let $\eta$ denote the left node in the pairing, and we adopt the terminology that the pairing is performed on $\eta$. For clarity we introduce notion to separately represent the two nodes before and after the pairing. 
This notation is described now and illustrated in Figure~\ref{f:onepair}.
Let  $a_\eta = \eta$ and $b_\eta$ be the right sibling of $a_\eta$; the nodes $a_\eta$ and $b_\eta$ are paired in the first pairing pass. 
We will refer to these nodes after the pairing has been completed as $w_\eta$ and $l_\eta$; the winner of the pairing is $w_\eta$ and the loser is $l_\eta$. The rank potential increase of the pairing is then simply $r(w_\eta)+r(l_\eta)-r(a_\eta)-r(b_\eta)$. We use $c_\eta$ to denote the right sibling of $b_\eta$ before the pairing; if there is no such sibling, we imagine $c_\eta$ to be a lone black node. 

Given this notation, we now bound potential change caused by a single pairing on a node $\eta$.

\begin{lemma} \label{lem:wpair}
In a single pairing 
the rank potential change is at most $\rkk \log s(a_\eta)- \rkk \log s(c_\eta) -\rkk$ if both nodes in the paring are white.
\end{lemma}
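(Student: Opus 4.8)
The plan is to follow the template of the classical pairing-heap rank analysis from~\cite{pair}, tracking only the rank potential and being careful about where coloring could interfere. Recall that before the pairing the relevant nodes are $a_\eta$ (left), $b_\eta$ (its right sibling), and $c_\eta$ (the sibling to the right of $b_\eta$), and that after the pairing the winner $w_\eta$ has as its new leftmost child (in the general tree) the loser $l_\eta$, so in the binary representation $l_\eta$ becomes the left child of $w_\eta$ and $c_\eta$ becomes the right child of $w_\eta$. I would first record the easy inequalities that hold by monotonicity of $s(\cdot)$ along ancestor--descendant paths in the binary representation: since both nodes in the pairing are white, $s(w_\eta) \le s(a_\eta)$ is \emph{not} quite right — rather, the key facts are that the induced subtree of $a_\eta$ (before) and of $w_\eta$ (after) consist of the same set of white nodes together with possibly $c_\eta$'s subtree, so $s(w_\eta) \ge s(a_\eta)$ and $s(w_\eta) \ge s(c_\eta)$, while $s(l_\eta) \le s(a_\eta)$ and $s(b_\eta) \le s(a_\eta)$ (both $b_\eta$'s old subtree and $l_\eta$'s subtree are contained in $a_\eta$'s new subtree minus the top). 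I would also note $r(a_\eta) = \rk \log s(a_\eta)$ and similarly for the others, since all these nodes are white.

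Next I would write the rank change $\Delta = r(w_\eta) + r(l_\eta) - r(a_\eta) - r(b_\eta)$ and bound it. Using $r(b_\eta) \ge 0$ and $r(a_\eta) = \rk\log s(a_\eta)$, and $r(w_\eta) = \rk \log s(w_\eta)$, the goal reduces to showing
$$
\rk\log s(w_\eta) + \rk\log s(l_\eta) - \rk\log s(a_\eta) \le \rk\log s(a_\eta) - \rk\log s(c_\eta) - \rk.
$$
Since $c_\eta$'s subtree sits inside $w_\eta$'s subtree after the pairing and is disjoint from $l_\eta$'s subtree there (one is the right child's subtree, the other the left child's subtree of $w_\eta$), we have $s(l_\eta) + s(c_\eta) \le s(w_\eta) \le s(a_\eta) + s(c_\eta)$; I'd use the left inequality together with the standard fact that for positive $x,y$ with $x + y \le N$ one has $\log x + \log y \le 2\log N - 2$. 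Applying this with $x = s(l_\eta)$, $y = s(c_\eta)$, $N = s(w_\eta)$ gives $\log s(l_\eta) + \log s(c_\eta) \le 2\log s(w_\eta) - 2$, hence $\log s(l_\eta) \le 2\log s(w_\eta) - \log s(c_\eta) - 2$. Substituting and using $s(w_\eta) \le s(a_\eta)+ s(c_\eta)$... — here I need to be a little more careful, so let me instead apply the $\log x + \log y \le 2\log(x+y) - 2$ lemma to $x = s(l_\eta)$ and $y = s(c_\eta)$ directly with $x+y \le s(a_\eta)$ (which holds because $l_\eta$'s subtree and $c_\eta$'s subtree are disjoint and both lie in $a_\eta$'s \emph{old} subtree together with $c_\eta$ — actually $c_\eta$ is outside $a_\eta$'s old subtree, so I must use the \emph{new} subtree $s(w_\eta) \le s(a_\eta) + s(c_\eta)$ as the bound on $x+y$). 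This is the delicate bookkeeping step.

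Cleanly, the chain I expect to use is: $r(w_\eta) \le \rk \log(s(a_\eta) + s(c_\eta))$; $r(l_\eta) + \rk\log s(c_\eta) \le 2\rk\log(s(l_\eta) + s(c_\eta)) - 2\rk \le 2\rk \log s(w_\eta) \le 2\rk\log(s(a_\eta)+s(c_\eta)) - 2\rk$ is too weak; so instead I bound $r(w_\eta) + r(l_\eta)$ together by noting $s(w_\eta) + s(l_\eta) \le$ something... The correct move, matching the target coefficient $\rkk = 36 = 2 \cdot \rk$, is: treat $l_\eta$'s new subtree and $c_\eta$'s new subtree as two disjoint subtrees of $w_\eta$, so $s(l_\eta) + s(c_\eta) \le s(w_\eta)$, giving by concavity $\rk\log s(l_\eta) \le \rk\log s(w_\eta) - \rk$ when $s(c_\eta) \ge s(l_\eta)$ is \emph{false} in general — so I actually get $2\rk \log\min(s(l_\eta),s(c_\eta)) \le \rk\log s(w_\eta) \cdot$, hmm. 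I will ultimately use the symmetric-split lemma in the form $\log A + \log B \le 2\log(A+B) - 2$ once, applied to $A = s(l_\eta)$, $B = s(c_\eta)$ with $A + B \le s(w_\eta)$, yielding $\rk\log s(l_\eta) - \rk\log s(c_\eta) \le \rk\log s(l_\eta) + \rk\log s(c_\eta) - 2\rk\log s(c_\eta) \le 2\rk\log s(w_\eta) - 2\rk - 2\rk\log s(c_\eta)$, which combined with $\rk\log s(w_\eta) \le \rk\log s(a_\eta)$... — I see that the cleanest organization is to first argue $s(w_\eta) \le s(a_\eta)$ is in fact \emph{true} because in the general tree the winner's subtree after this pairing equals $\{a_\eta, b_\eta\}$'s subtrees, all of which were in $a_\eta$'s old subtree plus $b_\eta$ which was a sibling — no. The main obstacle, and the step I expect to spend the most care on, is exactly this: correctly identifying which white-node sets are contained in which, so that the two applications of log-concavity (one giving the $\log s(c_\eta)$ term, one giving the additive $-\rk$) land with the right constants; the rest is mechanical substitution and dropping the nonnegative term $r(b_\eta)$.
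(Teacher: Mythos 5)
There is a genuine gap, in two places. First, you never commit to the key structural fact that $s(w_\eta)=s(a_\eta)$ exactly, and the reason you waver is that you are conflating the general-tree subtree with the binary-representation subtree on which $s(\cdot)$ is defined. In the binary view, $c_\eta$ is the right child of $b_\eta$, which is the right child of $a_\eta$, so $c_\eta$'s induced subtree is already contained in $a_\eta$'s induced subtree \emph{before} the pairing; after the pairing the winner occupies $a_\eta$'s old position with $l_\eta$ as its left child and $c_\eta$ as its right child, so its induced subtree is exactly the same node set as $a_\eta$'s was. Hence $r(w_\eta)=r(a_\eta)$ and these two terms cancel outright (this is the first step of the paper's proof), and the containment needed for the concavity step is simply $s(l_\eta)+s(c_\eta)\leq s(w_\eta)=s(a_\eta)$ --- there is no ``$s(w_\eta)\leq s(a_\eta)+s(c_\eta)$'' bookkeeping to do. Your claim that ``$c_\eta$ is outside $a_\eta$'s old subtree'' is false under the paper's definition of $s$.

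Second, and more damaging: you propose to discard $r(b_\eta)$ via $r(b_\eta)\geq 0$. After the cancellation $r(w_\eta)=r(a_\eta)$ this leaves only $r(l_\eta)$, and one application of $\log x+\log y\leq 2\log(x+y)-2$ to $x=s(l_\eta)$, $y=s(c_\eta)$ yields at best $\rkk\log s(a_\eta)-\rk\log s(c_\eta)-\rkk$, which has only half of the claimed coefficient on $-\log s(c_\eta)$. (Your displayed target inequality already reflects this: its right-hand side carries $-\rk\log s(c_\eta)-\rk$ where the lemma requires $-\rkk\log s(c_\eta)-\rkk$.) The paper instead keeps the $-r(b_\eta)$ term and uses the binary-representation containment $s(b_\eta)\geq s(c_\eta)$ to bound $-\rk\log s(b_\eta)\leq-\rk\log s(c_\eta)$; this second copy of $-\rk\log s(c_\eta)$, added to the one produced by the concavity step, is what gives the full $-\rkk\log s(c_\eta)$. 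That coefficient is not cosmetic: it is exactly what makes the first pairing pass telescope in Lemma~\ref{l:ema}, where $c_{\eta_j}=a_{\eta_{j+1}}$ and each $+\rkk\log s(a_{\eta_{j+1}})$ must be absorbed by the preceding $-\rkk\log s(c_{\eta_j})$.
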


\begin{proof}
This bound on the rank potential change can be mathematically derived as follows, starting from the ranks of the two items involved in the pairing:

\begin{align}
&r(w_\eta)+r(l_\eta)-r(a_\eta)-r(b_\eta) \nonumber
\\
\intertext{Observe that $r(a_\eta)=r(w_\eta)$, and cancel them.}
&= r(l_\eta)-r(b_\eta) \label{eqa}
\\
\intertext{Use the definition that $r(x)=\rk \log s(x)$.}
&= \rk \log s(l_\eta) - \rk \log s(b_\eta)
\nonumber
\\
\intertext{Add and subtract $\rk \log s(c_\eta) + \rkk$.}
&= \rk \log s(l_\eta) + \rk \log s(c_\eta) + \rkk - \rk \log s(b_\eta) - \rk \log s(c_\eta) - \rkk
\nonumber
\\
\intertext{Use the fact that $s(b_\eta)\geq s(c_\eta)$.}
&= \rk \log s(l_\eta) + \rk \log s(c_\eta) + \rkk - \rkk \log s(c_\eta) - \rkk
\nonumber
\\
\intertext{Combine first three terms.}
&\leq \rk \log 4 s(l_\eta) s(c_\eta)- \rkk \log s(c_\eta) - \rkk
\nonumber
\\
\intertext{Use the fact that $(x+y)^2 \geq 4xy$.}
&\leq \rk \log (s(l_\eta)+s(c_\eta))^2- \rkk \log s(c_\eta) - \rkk
\nonumber
\\
&\leq \rkk \log (s(l_\eta)+s(c_\eta))- \rkk \log s(c_\eta) - \rkk
\nonumber
\intertext{Observe that $s(l_\eta)+s(c_\eta)\leq s(a_\eta)$}
\nonumber
\nonumber
&\leq \rkk \log s(a_\eta)- \rkk \log s(c_\eta) - \rkk
\nonumber
\end{align}

\
\qed
\end{proof}

\newcommand{\pairfig}[4]{
\ssbb{
\btb
\node[a]{}
  child {node [a]{} edge from parent[draw=none]}
  child {node [b,#1]{$a_{\eta}$ } 
   child { node[c]{} }
   child { node[b,#2]{$b_{\eta}$} 
     child { node[c]{} }
     child { node[b]{$c_{\eta}$} 
       child { node[c]{} }
      child { node[c]{} }    
     }}}
;
\end{tikzpicture}
}
&
\ssbb{
\btb
\tikzstyle{level 2}=[sibling distance=20mm]
\tikzstyle{level 3}=[sibling distance=10mm]
\node[a]{}
  child {node [a]{} edge from parent[draw=none]}
  child {node [b,#3]{$w_{\eta}$}
       child {node [b,#4]{$l_{\eta}$}     
         child { node[c]{} }
         child { node[c]{} }
       }
       child {node [b]{$c_{\eta}$}     
         child { node[c]{} }
         child { node[c]{} }
       }       
  }
;
\end{tikzpicture}
}
}

\begin{figure}
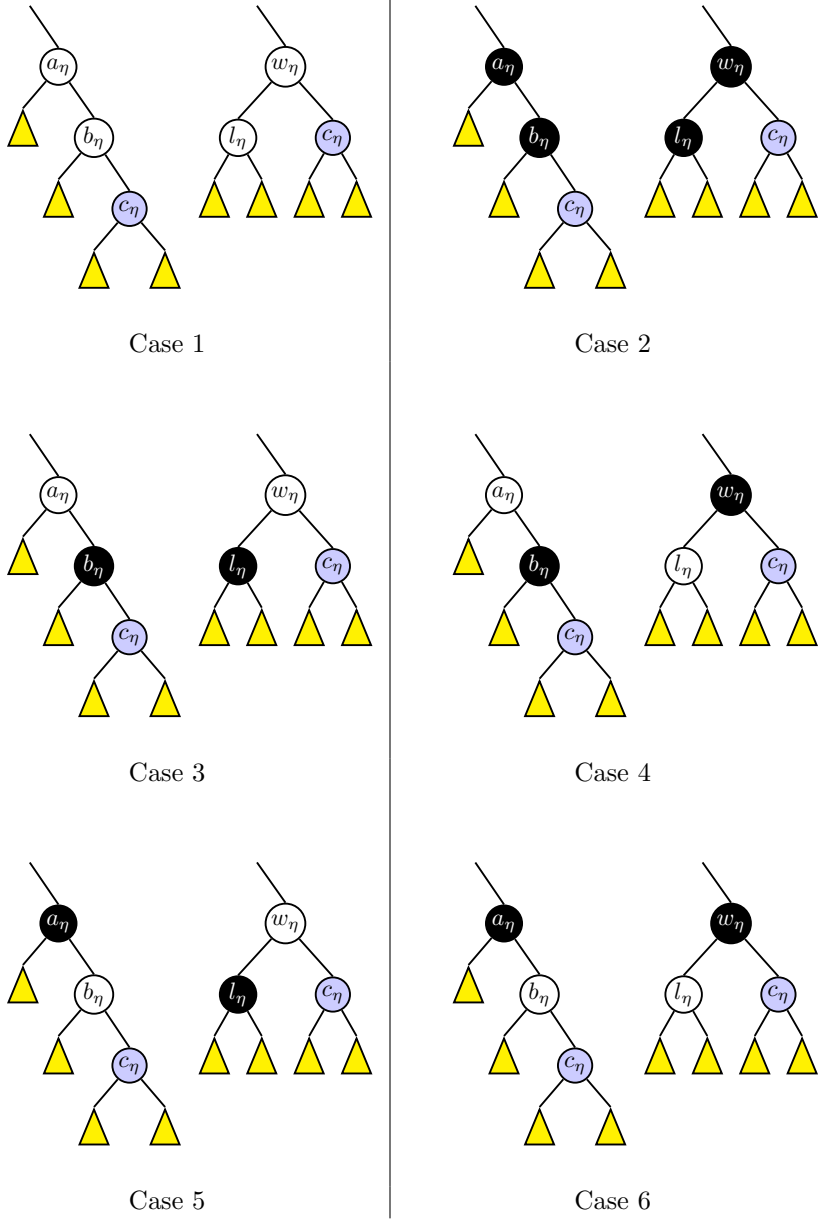

\begin{center}
\begin{tabular}{p{1in}p{1in}|p{1in}p{1in}}
\pairfig{w}{w}{w}{w} &
\pairfig{k}{k}{k}{k} \\
\multicolumn{2}{c|}{Case 1} &
\multicolumn{2}{c}{Case 2} \\
\pairfig{w}{k}{w}{k} &
\pairfig{w}{k}{k}{w} \\
\multicolumn{2}{c|}{Case 3} &
\multicolumn{2}{c}{Case 4} \\
\pairfig{k}{w}{w}{k} &
\pairfig{k}{w}{k}{w} \\
\multicolumn{2}{c|}{Case 5} &
\multicolumn{2}{c}{Case 6} \\
\end{tabular}
\end{center}
\caption{The six cases in the proof of Lemma~\ref{lem:onepair}, in the binary view. The left figure of each case is before pairing $a_{\eta}$ and $b_{\eta}$, and the right is after with $w_{\eta}$ denoting the winner of the pairing and $l_{\eta}$ denoting the loser.}
\label{f:onepaira}
\end{figure}

\begin{lemma} 
\label{lem:onepair}
In a single pairing 
the rank potential change is at most $\rkk \log s(a_\eta)- \rkk \log s(b_\eta)$. 
\end{lemma}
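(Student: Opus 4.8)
The plan is to reduce the claimed bound to a short case analysis on the colours of the two nodes being paired and on which of them wins --- the six configurations depicted in Figure~\ref{f:onepaira} --- after first recording a few purely structural facts about how the function $s(\cdot)$ is rearranged by a single pairing. Throughout I would write the rank change as $r(w_\eta)+r(l_\eta)-r(a_\eta)-r(b_\eta)$ and split it as $[\,r(w_\eta)-r(a_\eta)\,]+[\,r(l_\eta)-r(b_\eta)\,]$, recalling that $r(x)=\rk\log s(x)$ when $x$ is white and $r(x)=0$ when $x$ is black.

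First I would establish three structural facts, all immediate from the leftmost-child/right-sibling correspondence together with the pictures in Figure~\ref{f:onepair}: (a) the winner $w_\eta$ ends up in the slot formerly held by $a_\eta$, and the set of white nodes in its binary subtree afterwards is exactly the set of white nodes that were in $a_\eta$'s binary subtree beforehand, so $s(w_\eta)^{\mathrm{after}}=s(a_\eta)^{\mathrm{before}}$ --- in particular $r(w_\eta)^{\mathrm{after}}\le\rk\log s(a_\eta)^{\mathrm{before}}$, with equality when $w_\eta$ is white, and, when $a_\eta$ is also white, $r(w_\eta)^{\mathrm{after}}=r(a_\eta)^{\mathrm{before}}$ (this last equality is precisely the one already used inside the proof of Lemma~\ref{lem:wpair}); (b) after the pairing the loser $l_\eta$ hangs below $w_\eta$, so every white node in $l_\eta$'s binary subtree afterwards was in $a_\eta$'s binary subtree beforehand, giving $s(l_\eta)^{\mathrm{after}}\le s(a_\eta)^{\mathrm{before}}$; and (c) $s$ is nonincreasing along a binary-tree edge and $b_\eta$ is the binary right child of $a_\eta$, so $s(a_\eta)^{\mathrm{before}}\ge s(b_\eta)^{\mathrm{before}}$, hence $\rk\log s(a_\eta)-\rk\log s(b_\eta)\ge 0$.

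With these facts, each of the six cases is a one-line computation. If both nodes are white (Case~1), fact~(a) cancels $r(w_\eta)$ against $r(a_\eta)$, leaving $r(l_\eta)-r(b_\eta)$, which fact~(b) bounds by $\rk\log s(a_\eta)-\rk\log s(b_\eta)$; I would point out that this does \emph{not} follow from Lemma~\ref{lem:wpair}, whose bound is phrased via $s(c_\eta)$ rather than $s(b_\eta)$, so Case~1 genuinely needs this separate, simpler argument. If $b_\eta$ is white but $a_\eta$ is black (Cases~5 and~6), the term $r(b_\eta)=\rk\log s(b_\eta)$ exactly absorbs the $-\rk\log s(b_\eta)$ of the target, reducing the goal to $r(w_\eta)+r(l_\eta)\le\rk\log s(a_\eta)$ (using $r(a_\eta)=0$), which holds with equality in Case~5 (winner white, via fact~(a)) and by fact~(b) in Case~6 (winner black, so only $r(l_\eta)$ survives). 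If $b_\eta$ is black (Cases~2,~3,~4), then $r(b_\eta)=0$, the target is nonnegative by fact~(c), and the rank change is itself $\le 0$: it is $0$ in Case~2 (all ranks zero) and Case~3 (fact~(a)), while in Case~4 the term $-r(a_\eta)=-\rk\log s(a_\eta)$ (as $a_\eta$ is white) dominates the $r(l_\eta)\le\rk\log s(a_\eta)$ allowed by fact~(b).

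The only real care needed --- and what I expect to be the main, if minor, obstacle --- is the bookkeeping that keeps the two halves of the target aligned in every case: the ``$+\rk\log s(a_\eta)$'' is always supplied through the winner's slot via fact~(a), whereas the ``$-\rk\log s(b_\eta)$'' is covered either by $r(b_\eta)$ itself when $b_\eta$ is white, or by the subtree-containment monotonicity of fact~(c) when $b_\eta$ is black; one must also dispatch in a sentence the degenerate possibility that a black node has no white descendants, so $s=0$ and the stated bound is vacuous. Unlike Lemma~\ref{lem:wpair}, no arithmetic inequality of the $(x+y)^2\ge 4xy$ flavour is needed here --- everything rests on the three structural facts and monotonicity of $\log$.
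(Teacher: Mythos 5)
Your proposal is correct and takes essentially the same route as the paper: a six-way case split on the colours of $a_\eta$ and $b_\eta$ and on which node wins, using exactly the facts that $s(w_\eta)$ after equals $s(a_\eta)$ before, that $s(l_\eta)\le s(a_\eta)$, and that $s(b_\eta)\le s(a_\eta)$ makes the target nonnegative (which is also what silently upgrades your $\rk$-coefficient bounds to the stated $\rkk$-coefficient ones, just as in the paper). Your version merely isolates these structural observations as explicit preliminary facts rather than deriving them inline case by case.
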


\begin{proof}
The analysis is split into six different types of pairing: white-white, black-black, and four types of black-white depending on whether the white node starts as $a_\eta$ or $b_\eta$ and ends as $w_\eta$ or $l_\eta$; see Figure~\ref{f:onepaira}. We look at each of these cases separately.

\paragraph{Case 1: White-white.} From \eqref{eqa} of Lemma~\ref{lem:wpair} we know the rank potential gain is at most $r(l_\eta)-r(b_\eta)$; observing that $r(l_\eta) \leq r(a_\eta)$ bounds the rank potential gain in this case to be at most $\rkk \log s(a_\eta)- \rkk \log s(b_\eta)$.

\paragraph{Case 2: Black-black.} None of the nodes involved have rank potential and thus there is no rank potential change. As $0\leq  \rkk \log s(a_\eta)- \rkk \log s(b_\eta)$, that gives the result in this case.

\paragraph{Case 3: White-black, white is $a_\eta$ and $w_\eta$.} Since $r(a_\eta)=r(w_\eta)$, there is no potential change. As $0\leq  \rkk \log s(a_\eta)- \rkk \log s(b_\eta)$, that gives the result in this case.

\paragraph{Case 4: White-black, white is $a_\eta$ and $l_\eta$.} Since $r(a_\eta)\geq r(l_\eta)$, there is no potential gain. As $0\leq  \rkk \log s(a_\eta)- \rkk \log s(b_\eta)$, that gives the result in this case.

\paragraph{Case 5: White-black, white is $b_\eta$ and $w_\eta$.} The gain is $r(w_\eta)-r(b_\eta)$. 
Since $r(w_\eta)=\rk \log s(w_\eta)=\rk \log s(a_\eta)$, the rank potential gain is at most
$\rkk \log s(a_\eta)- \rkk \log s(b_\eta)$.

\paragraph{Case 6: White-black, white is $b_\eta$ and $l_\eta$.} The gain is is $r(l_\eta)-r(b_\eta)$. 
As in the white-white case, observing that $r(l_\eta) \leq r(a_\eta)$ bounds the rank potential gain in this case to be at most $\rkk \log s(a_\eta)- \rkk \log s(b_\eta)$.
\qed
\end{proof}

\begin{corollary}
\label{co}
In a single pairing 
the rank potential change is at most $\rkk \log s(a_\eta)- \rkk \log s(c_\eta)$. 
\end{corollary}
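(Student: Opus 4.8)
The plan is to derive Corollary~\ref{co} as essentially a one‑line consequence of Lemmas~\ref{lem:wpair} and~\ref{lem:onepair}, by splitting according to whether the pairing is white--white or not.

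If both nodes in the pairing are white, Lemma~\ref{lem:wpair} already bounds the rank potential change by $\rkk \log s(a_\eta) - \rkk \log s(c_\eta) - \rkk$. Since $\rkk > 0$, discarding the additive $-\rkk$ only weakens the bound, so in this case the rank potential change is at most $\rkk \log s(a_\eta) - \rkk \log s(c_\eta)$, which is exactly the claimed bound.

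If the pairing is not white--white, then it falls under one of Cases~2--6 of Lemma~\ref{lem:onepair}, which bounds the rank potential change by $\rkk \log s(a_\eta) - \rkk \log s(b_\eta)$. It therefore suffices to show $s(c_\eta) \leq s(b_\eta)$, since then monotonicity of the logarithm gives $-\rkk \log s(b_\eta) \leq -\rkk \log s(c_\eta)$ and the stated bound follows. This inequality is precisely the binary‑representation fact already exploited inside the proof of Lemma~\ref{lem:wpair} (as ``$s(b_\eta)\geq s(c_\eta)$''): before the pairing, $c_\eta$ is the right sibling of $b_\eta$, hence the right child of $b_\eta$ in the leftmost‑child/right‑sibling binary tree, so the induced binary subtree of $c_\eta$ is contained in that of $b_\eta$ and contains no more white nodes; when $c_\eta$ is the imagined lone black node the inequality is trivial.

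I do not expect any genuine obstacle: the corollary is pure bookkeeping on top of the two lemmas, and the one substantive observation, $s(c_\eta)\leq s(b_\eta)$, is the same sibling‑is‑a‑binary‑child fact already in use. The only point requiring a moment's care is the degenerate situation $s(c_\eta)=0$, where $\log s(c_\eta)$ must be read under whatever convention is already implicitly in force for the identically‑shaped expression in Lemma~\ref{lem:wpair}; under that convention both cases above go through unchanged.
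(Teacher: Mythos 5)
Your proof is correct and takes essentially the same route as the paper, which simply invokes Lemma~\ref{lem:onepair} together with $s(c_\eta)\leq s(b_\eta)$. Your separate treatment of the white--white case via Lemma~\ref{lem:wpair} is sound but redundant, since white--white pairings are already Case~1 of Lemma~\ref{lem:onepair} and are therefore covered by the uniform argument.
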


\begin{proof}
Follows from Lemma~\ref{lem:onepair} and the fact that $s(c_\eta)\leq s(b_\eta)$.
\qed
\end{proof}

\begin{lemma} \label{l:ema}
The rank potential gain in the \op{Extract-Min} operation where there are $w$ white-white pairings in the first pairing pass is at most $\rkkk \log n_i -\rkk w$.
\end{lemma}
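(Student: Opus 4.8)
The plan is to write the total rank-potential change of the \op{Extract-Min} as the sum, over every pairing performed in both the first (left-to-right) and second (right-to-left) pairing passes, of the per-pairing change $r(w_\eta)+r(l_\eta)-r(a_\eta)-r(b_\eta)$, and then to collapse this sum by telescoping. Label the children of the removed root along the right spine of the binary representation as $x_1,\ldots,x_c$; since each $x_{j+1}$ lies in the induced subtree of $x_j$, we have $s(x_1)\geq s(x_2)\geq\cdots\geq s(x_c)$ and $s(x_1)\leq n_i$. By the identity $r(w_\eta)=r(a_\eta)$ recorded in \eqref{eqa}, each pairing contributes only $r(l_\eta)-r(b_\eta)$, so no winner ever raises the potential directly; this is the observation that makes a telescope possible.

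First I would dispatch the first pass. Its pairings act on $x_1,x_3,x_5,\ldots$, and for the pairing on $x_{2j-1}$ the node $c_\eta$ is precisely $x_{2j+1}$, which is the $a_\eta$ of the next pairing. Applying Corollary~\ref{co}, which bounds each contribution by $\rkk\log s(a_\eta)-\rkk\log s(c_\eta)$, the intermediate $\rkk\log s(\cdot)$ terms cancel in consecutive pairs and the whole first pass collapses to $\rkk\log s(x_1)\leq\rkk\log n_i$ (an unpaired trailing child, or the imagined lone black $c_\eta$ at the end of the spine, has $\log s=0$ and only helps). For each of the $w$ white-white pairings I would instead invoke the sharper Lemma~\ref{lem:wpair}, whose bound is exactly the Corollary bound lowered by an additive $\rkk$; this injects an extra $-\rkk$ at every white-white pairing, so the first pass is bounded by $\rkk\log n_i-\rkk w$.

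The main obstacle is the second pass, which cannot simply be discarded because linking winners can genuinely increase rank potential. The device I would use is that the winner of each first-pass pairing keeps its binary subtree size, $s(\text{winner of pair }j)=s(x_{2j-1})$, so that when the second pass incrementally absorbs these winners from the right, each absorbed node $b_\eta$ has the size of a previously seen winner while its loser $l_\eta$ collapses to a strictly smaller subtree. The key point is that the first-pass estimate above is far from tight — the first-pass losers shrink well below what Corollary~\ref{co} charges — and this surplus decrease is precisely what must pay for the second pass. Bounding the two passes independently (Corollary~\ref{co} on the first, Lemma~\ref{lem:onepair} on the second) loses a constant factor and yields only the weaker coefficient $\rkkk$; obtaining the stated leading coefficient $\rkk$ requires instead running a \emph{single} telescope over all pairings, matching the loser sizes produced in the first pass against the sizes absorbed in the second, so that the second pass is absorbed into the already-accounted first-pass total. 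Carrying out this matching, and verifying that the coloring never obstructs it — a black $c_\eta$ or a black absorbed node can only remove potential, never add it — is the crux; completing it gives the claimed bound $\rkk\log n_i-\rkk w$.
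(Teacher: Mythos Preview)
Your first-pass analysis is identical to the paper's: apply Corollary~\ref{co} to every pairing, substitute Lemma~\ref{lem:wpair} at the $w$ white-white pairings, and telescope via $c_{\eta_j}=a_{\eta_{j+1}}$ to obtain $\rkk\log n_i-\rkk w$. You then correctly observe that bounding the second pass independently via Lemma~\ref{lem:onepair} and telescoping adds another $\rkk\log n_i$, for a total of $\rkkk\log n_i-\rkk w$.

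That is precisely what the paper does, and it stops there. The lemma statement as printed contains a typo: the leading coefficient should be $\rkkk$, not $\rkk$. This is visible both in the paper's own final sentence of the proof (``gives the claimed bound of a rank potential gain of at most $\rkkk\log n_i-\rkk w$'') and in Section~\ref{sec:em}, where the bound is invoked as $\rkkk\log n_i-\rkk w$ and fed into the summary displayed equation.

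So your proposed ``single telescope'' across both passes, intended to shave the coefficient from $\rkkk$ down to $\rkk$, is chasing a typo. It is also the one part of your plan you do not actually execute, and it is not clear it can be made to work: the loser sizes produced in the first pass are not in general matchable against the sizes absorbed in the second, because which node wins each pairing is governed by key values, not by subtree sizes, so the ``surplus decrease'' you want to credit to the second pass need not materialize. Fortunately none of this is needed; the two-pass bound $\rkkk\log n_i-\rkk w$ is what the downstream analysis uses.
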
 

\begin{proof}
In an  \op{Extract-Min}, the root is removed and returned, and then there are two pairing passes. We look at the potential change of each of these events separately. Removing the root causes no gain in potential. 

\paragraph{First pairing pass.}
In the first pairing pass, let $p$ be the total number of pairings and let $\eta_j$ denote the item the $j$th pairing from the left is performed on.  Lemma~\ref{lem:wpair} and Corollary~\ref{co}  bound the potential gain of the first pairing pass to be:

\begin{align*}
 &
 \overbrace{\mathop{\sum_{j\in [1,p]}}_{\mathclap{\eta_i\text{ is white-white}}} (\rkk \log s(a_{\eta_j})- \rkk \log s(c_{\eta_j}) -\rkk)}^{\text{from Lemma~\ref{lem:wpair}}}
+ \overbrace{\mathop{\sum_{j\in [1,p]}}_{\mathclap{\eta_i\text{ is not white-white}}} (\rkk \log s(a_{\eta_j})- \rkk \log s(c_{\eta_j}))}^{\text{from Corollary~\ref{co}}}
\\
\intertext{Rearranging and using the fact that the left sum has exactly $w$ elements.}
& =- \rkk w + \sum_{j=1}^p (\rkk \log (s(a_{\eta_j}))- \rkk \log s(c_{\eta_j})) 
\\
\intertext{Since $c_{\eta_j}$ is $a_{\eta_{j+1}}$ the sum telescopes.}
& =- \rkk w + \rkk \log s(a_{\eta_1})- \rkk \log s(c_{\eta_p})) 
\\
& \leq - \rkk w + \rkk \log s(a_{\eta_1})\\
& \leq - \rkk w + \rkk \log n_i
\end{align*}

\paragraph{Second pairing pass.}
In the second pairing pass, let $q$ be the total number of pairings, and let $\mu_j$ denote the item that the $j$th pairing is performed on. As the pairings are performed incrementally from right-to-left, $\mu_j$ is the $j+1$st node from the right before the beginning of the second pairing pass. Using Lemma~\ref{lem:onepair} bounds the potential gain of the second pairing pass to be:

$$\sum_{j=1}^q (\rkk \log s(a_{\eta_j})- \rkk \log s(b_{\eta_j})) $$

Since $b_{\eta_j}$ is $\eta_{j-1}$ the sum telescopes to 
at most $\rkk \log s(a_{\eta_q})$ which is at most $\rkk \log n_i$.

\paragraph{Putting it all together.}
Summing the upper bounds on rank potential gain of $  \rkk \log n_i- \rkk w$ for the first pairing pass and $\rkk \log n_i$ for the second pairing pass gives the claimed bound of a rank potential gain of at most 
$  \rkkk \log n_i- \rkk w$ for the \op{Extract-Min} operation.
\qed
\end{proof}

\section{Acknowledgements}

The author would like to thank Michael L.~Fredman for his support and advice, without which this work would not have been possible. He introduced the author to pairing heaps and the fun world of the amortized analysis of data structures using potential functions. Crucially, he pointed out that constant-amortized-time insertion in pairing heaps was not obvious and was worth figuring out.
I also thank Adam Buchsbaum, Va\v{s}ek Chv\'{a}tal, and Diane L.~Souvaine for their comments and encouragement on the preliminary version of this result which appeared in \cite{thesis}.



\end{document}